\def\margin{2.9cm}
\title{A Note on Arc-Disjoint Cycles in Bipartite Tournaments}
\date{}
\author[1]{Jasine Babu}
\author[2]{Ajay Saju Jacob}
\author[1]{R. Krithika}
\author[1]{Deepak Rajendraprasad }
 \affil[1]{Indian Institute of Technology Palakkad, Palakkad, India. \\\texttt{\{jasine|krithika|deepak\}@iitpkd.ac.in}}
 \affil[2]{Indian Institute of Technology Madras, Chennai, India.\\ \texttt{ee16b129@smail.iitm.ac.in}}
\def\margin{2.9cm}
\theoremstyle{plain}
\newtheorem{theorem}{Theorem}
\newtheorem{lemma}[theorem]{Lemma}
\theoremstyle{definition}
\newtheorem{observation}[theorem]{Observation}
\newtheorem{reduction rule}{Reduction Rule}[section]
\DeclareMathOperator{\first}{first}
\newcommand{\injective}[2][]{\ext@arrow 0359\rightarrowfill@{#1}{#2}}
\begin{document}

\maketitle

\begin{abstract}
We show that for each non-negative integer $k$, every bipartite tournament either contains $k$ arc-disjoint cycles or has a feedback arc set of size at most $7(k-1)$. 
\end{abstract}

\section{Introduction}
Tournaments and bipartite tournaments form a mathematically rich subclass of directed graphs with interesting structural and algorithmic properties \cite{gutin,moon,digraphs}. A tournament is a directed graph obtained by assigning a unique orientation to each edge of an undirected complete simple graph. Similarly, a bipartite tournament is a directed graph obtained by assigning a unique orientation to each edge of an undirected complete bipartite simple graph. Tournaments and bipartite tournaments are tremendously useful in modelling competitions and thus problems on these graphs have several applications in areas like machine learning, voting systems and social choice theory. One such problem is \textsc{Feedback Arc Set}.  A feedback arc set is a set of arcs of the given graph whose deletion results in an acyclic graph. Given a directed graph and a non-negative integer $k$, \textsc{Feedback Arc Set} is the problem of determining if the graph has a feedback arc set of size at most $k$. Finding a minimum feedback arc set in tournaments and bipartite tournaments is \NP-hard \cite{fast-hard-alon,fast-hard,fast-hard3,fasbt-nphard}. However, it is known that for each non-negative integer $k$, every tournament either contains $k$ arc-disjoint cycles or has a feedback arc set of size at most $5k$ \cite{mfcs19} and results from \cite{Chen15,McDonald18} improve the bound of $5k$ to $3.7k$. In this note, we prove an analogous result for bipartite tournaments \footnote{This result is mentioned in \cite{walcom20} as Theorem 2, however, the proof has a gap.}.\\
\noindent {\bf Preliminaries. } A {\em directed graph} (or {\em digraph}) is a pair consisting of a set $V$ of vertices and a set $A$ of arcs. An arc is specified as an ordered pair of vertices. We will consider only simple unweighted digraphs. For a digraph $D$, $V(D)$ and $A(D)$ denote the set of its vertices and the set of its arcs, respectively. Two vertices $u$, $v$ are said to be {\em adjacent}  in $D$ if $(u,v) \in A(D)$ or $(v,u) \in A(D)$. For a vertex $v \in V(D)$, its {\em out-neighborhood}, denoted by $N^{+}(v)$, is the set $\{u \in V(D) \mid (v,u) \in A(D)\}$ and its {\em in-neighborhood}, denoted by $N^{-}(v)$, is the set $\{u \in V(D) \mid (u,v) \in A(D)\}$. This notation is extended to a subset $X$ of vertices as $N^{+}(X)=\cup_{v \in X}N^{+}(v)$ and $N^{-}(X)=\cup_{v \in X}N^{-}(v)$. For a set $X \subseteq V(D) \cup A(D)$, $D-X$ denotes the digraph obtained from $D$ by deleting $X$. 

A {\em path} $P$ in $D$ is a sequence $(v_1,\dots,v_k)$ of distinct vertices such that for each $i \in [k-1]$, $(v_i,v_{i+1}) \in A(D)$. A path $P$ is called an {\em induced} path if there is no arc in $D$ that is between two non-consecutive vertices of $P$. A {\em cycle} $C$ in $D$ is a sequence $(v_1,\dots,v_k)$ of distinct vertices such that $(v_1,\dots,v_k)$ is a path and $(v_k,v_1) \in A(D)$. A cycle $C=(v_1,\dots,v_k)$ is called an {\em induced} (or {\em chordless}) cycle if there is no arc in $D$ that is between two non-consecutive vertices of $C$ with the exception of the arc $(v_k,v_1)$. The length of a path or cycle $X$ is the number of vertices in it and is denoted by $|X|$. A cycle of length $q$ is called a $q$-cycle and a cycle on three vertices is also called a {\em triangle}. A digraph is called a {\em directed acyclic graph} if it has no cycles. Any directed acyclic graph $D$ has an ordering $\sigma$ called {\em topological ordering} of its vertices such that for each $(u,v) \in A(D)$, $\sigma(u)<\sigma(v)$ holds. 

A bipartite digraph is a digraph $B$ whose vertex set can be partitioned into two sets $X$ and $Y$ such that every arc in $B$ has one endpoint in $X$ and the other endpoint in $Y$. We denote $B$ as $B[X,Y]$ where $X$ and $Y$ form the bipartition of the underlying bipartite graph. It is easy to see that a bipartite digraph has no triangle and any 4-cycle is an induced 4-cycle. Given a digraph $D$, $D^R$ denotes the digraph obtained from $D$ by reversing all the arcs. For a set of arcs $F \subseteq A(D)$, $F^R$ denotes the set $\{(u,v) : (v,u) \in F\}$. The following result is well-known.

\begin{observation}
\label{obs1}
A set of arcs $F$ is a feedback arc set of the digraph $D$ if and only if $F^R$ is a feedback arc set of $D^R$.
\end{observation}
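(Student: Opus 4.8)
The plan is to reduce the claim to two elementary facts: that deleting arcs and reversing all arcs are operations that commute, and that a digraph is acyclic precisely when its reversal is acyclic. Combining these two facts immediately yields the stated equivalence, and since $(D^R)^R = D$ and $(F^R)^R = F$, it suffices to prove only one direction.

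First I would verify that $(D-F)^R = D^R - F^R$ for every $F \subseteq A(D)$. This is just a matter of unwinding the definitions: both digraphs have vertex set $V(D)$, and an ordered pair $(u,v)$ is an arc of $(D-F)^R$ if and only if $(v,u) \in A(D) \setminus F$, if and only if $(u,v) \in A(D^R) \setminus F^R$, if and only if $(u,v)$ is an arc of $D^R - F^R$.

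Next I would record that for any digraph $H$, the map sending a cycle $(v_1,v_2,\dots,v_q)$ of $H$ to the sequence $(v_1,v_q,v_{q-1},\dots,v_2)$ is a bijection between the cycles of $H$ and the cycles of $H^R$; in particular, $H$ is acyclic if and only if $H^R$ is acyclic. Applying this with $H = D - F$ and using the identity from the previous step, $F$ is a feedback arc set of $D$ iff $D - F$ is acyclic iff $(D-F)^R = D^R - F^R$ is acyclic iff $F^R$ is a feedback arc set of $D^R$.

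There is no genuine obstacle here; the only point that requires a moment's care is keeping straight the definition of arc deletion and of $F^R$ (so that the commutation identity holds on the nose), and observing the involutive property noted above so that the two directions of the equivalence are indeed symmetric and no separate argument is needed for the converse.
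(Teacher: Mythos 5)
Your proof is correct and complete; the paper states this observation as well-known and gives no proof, and your argument --- commuting arc deletion with reversal via $(D-F)^R = D^R - F^R$, together with the bijection between cycles of a digraph and its reversal --- is exactly the standard justification the authors implicitly rely on. The remark that the involution $(D^R)^R = D$, $(F^R)^R = F$ makes one direction suffice is a nice touch and introduces no gap.
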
 

\section{Cycles and Feedback Arc Sets}
In this section, we show that for each non-negative integer $k$, every bipartite tournament either contains $k$ arc-disjoint cycles or has a feedback arc set of size at most $7(k-1)$. Consider a digraph $D$. Let $\mathcal{P}(D)$ denote the set of induced paths on four vertices in $D$. We define two equivalence relations $\sim_2$ and $\sim_3$ on $\mathcal{P}(D)$ as follows. For any two paths $P$ and $P'$ in $\mathcal{P}(D)$, 
\begin{itemize}
\item $P \sim_2 P'$ if and only if paths $P$ and $P'$ differ only in the second vertex.
\item $P \sim_3 P'$ if and only if paths $P$ and $P'$ differ only in the third vertex.
\end{itemize}

For every triple $(x,y,z)$ of vertices in $D$, let $E_{D,2}[x,y,z]$ denote the $\sim_2$-equivalence class consisting of paths in $\mathcal{P}(D)$ with $x$ as first vertex, $y$ as third vertex and $z$ as fourth vertex. Similary, let $E_{D,3}[x,y,z]$ denote the $\sim_3$-equivalence class consisting of paths in $\mathcal{P}(D)$ with $x$ as first vertex, $y$ as second vertex and $z$ as fourth vertex. By definition of $\sim_2$ and $\sim_3$, we have the following observation. 

\begin{observation}
\label{obs2}
For every triple $(x,y,z)$ of vertices in $D$, $E_{D,2}[x,y,z] = E_{D^R,3}[z,y,x]$.
\end{observation}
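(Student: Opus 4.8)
The plan is to read off the claimed identity from the definitions, using the natural reversal correspondence between the four-vertex induced paths of $D$ and those of $D^{R}$. For a sequence $P=(v_1,v_2,v_3,v_4)$ write $P^{R}=(v_4,v_3,v_2,v_1)$. The first step is to record the elementary fact that $P\in\mathcal{P}(D)$ if and only if $P^{R}\in\mathcal{P}(D^{R})$. This has three parts: $(v_i,v_{i+1})\in A(D)$ for every $i$ exactly when $(v_{i+1},v_i)\in A(D^{R})$ for every $i$, so $P$ is a path of $D$ iff $P^{R}$ is a path of $D^{R}$; the vertex count is unaffected; and $P$ is induced in $D$ iff $P^{R}$ is induced in $D^{R}$, because both statements assert that none of the three non-consecutive (unordered) vertex pairs---which are literally the same pairs for $P$ and for $P^{R}$---is joined by an arc, and an unordered pair is joined by an arc in $D$ precisely when it is joined by an arc in $D^{R}$.

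Given this, the observation reduces to bookkeeping of positions. A four-vertex sequence lies in $E_{D,2}[x,y,z]$ iff it belongs to $\mathcal{P}(D)$ and has the form $(x,w,y,z)$ for some vertex $w$; by the previous step this holds iff $(z,y,w,x)\in\mathcal{P}(D^{R})$, i.e. iff $(z,y,w,x)\in E_{D^{R},3}[z,y,x]$, since the members of that class are exactly the sequences of $\mathcal{P}(D^{R})$ of the form $(z,y,w,x)$. Thus $(\cdot)^{R}$ restricts to a bijection from $E_{D,2}[x,y,z]$ onto $E_{D^{R},3}[z,y,x]$; identifying each four-vertex path of $D$ with its reverse in $D^{R}$---the identification implicit in the statement---the two equivalence classes coincide. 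I would also note in passing that each of $E_{D,2}[x,y,z]$ and $E_{D^{R},3}[z,y,x]$ really is a single equivalence class of $\sim_2$, respectively $\sim_3$, because any two of its members agree in every coordinate except the one allowed to vary.

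I do not anticipate a genuine obstacle here: the statement is essentially a definition-chase. The only point worth flagging is the meaning of ``$=$'' in the statement---it is equality after the canonical reversal identification of $\mathcal{P}(D)$ with $\mathcal{P}(D^{R})$, not literal equality of vertex sequences---together with the remark that preservation of the induced property uses solely that $D$ and $D^{R}$ share an underlying undirected graph (and that the chord-exception appearing in the definition of an induced cycle plays no role for paths).
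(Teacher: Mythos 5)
Your proof is correct and is exactly the argument the paper intends: the paper offers no explicit proof, asserting the observation follows ``by definition of $\sim_2$ and $\sim_3$,'' and your definition-chase via the reversal bijection $P \mapsto P^{R}$ (including the check that being induced is preserved, and the remark that the stated equality is really equality under the canonical identification of $\mathcal{P}(D)$ with $\mathcal{P}(D^{R})$) is precisely the justification being elided.
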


Next, for a vertex $v \in V(D)$, define the following sets.
\begin{itemize}
\item $\first_D(v)$ is the number of $\sim_2$-equivalence classes consisting of paths in $\mathcal{P}(D)$ with $v$ as the first vertex.
\item $\sec_D(v)$ is the number of $\sim_3$-equivalence classes consisting of paths in $\mathcal{P}(D)$ with $v$ as the second vertex.

\end{itemize}

\begin{observation}
\label{obs3}
$\underset{v \in V(D)}\sum \first_D(v)$ is the number of $\sim_2$-equivalence classes on $\mathcal{P}(D)$ and $\underset{v \in V(D)} \sum \sec_D(v)$ is the number of $\sim_3$-equivalence classes on $\mathcal{P}(D)$.
\end{observation}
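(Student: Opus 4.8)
The plan is to prove both halves by the same elementary argument: partition the set of equivalence classes according to an invariant that is preserved by the relation, and observe that each block of this partition is counted by exactly one term of the sum.

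For the first claim, fix any path $P=(a,b,c,d)\in\mathcal{P}(D)$ and call $a$ its first vertex. Since $P\sim_2 P'$ means, by definition, that $P$ and $P'$ agree in the first, third, and fourth vertices, the first vertex is constant on every $\sim_2$-equivalence class; thus ``the common first vertex of the paths in the class'' is a well-defined function from the set of $\sim_2$-equivalence classes on $\mathcal{P}(D)$ to $V(D)$. This function partitions the set of all $\sim_2$-equivalence classes into blocks indexed by $v\in V(D)$, where the block of $v$ is precisely the collection of those $\sim_2$-equivalence classes all of whose members have $v$ as first vertex. By definition that block has $\first_D(v)$ elements. As the blocks are pairwise disjoint (a path, hence an equivalence class, has a unique first vertex) and their union is the set of all $\sim_2$-equivalence classes, summing $\first_D(v)$ over $v\in V(D)$ gives exactly the number of $\sim_2$-equivalence classes on $\mathcal{P}(D)$.

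The second claim is obtained by the identical argument with the second vertex playing the role of the first: since $P\sim_3 P'$ forces $P$ and $P'$ to agree in the first, second, and fourth vertices, the second vertex is constant on each $\sim_3$-equivalence class, the map sending a class to this common second vertex partitions the $\sim_3$-equivalence classes into blocks indexed by $V(D)$, and the block of $v$ has $\sec_D(v)$ elements by definition; summing over $v$ yields the number of $\sim_3$-equivalence classes. I do not anticipate any genuine obstacle here; the only point that needs checking is that the chosen invariant --- the first vertex for $\sim_2$ and the second vertex for $\sim_3$ --- is left unchanged by the corresponding relation, and this is immediate from the definitions of $\sim_2$ and $\sim_3$.
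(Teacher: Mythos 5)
Your proof is correct and is exactly the justification the paper intends: the paper states Observation \ref{obs3} without proof as an immediate consequence of the definitions, and your partition of the $\sim_2$-classes (resp.\ $\sim_3$-classes) by their common first (resp.\ second) vertex is precisely the implicit counting argument. Nothing is missing.
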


Observations \ref{obs2} and \ref{obs3} lead to the following.

\begin{observation}
\label{obs4}
$\underset{v \in V(D)}\sum \first_D(v) = \underset{v \in V(D^R)}\sum \sec_{D^R}(v)$ and $\underset{v \in V(D)}\sum \sec_{D}(v) = \underset{v \in V(D^R)} \sum \first_{D^R}(v)$.
\end{observation}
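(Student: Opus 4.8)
The plan is to obtain Observation~\ref{obs4} directly from Observations~\ref{obs2} and~\ref{obs3}, the only substantive ingredient being a change of indexing variable. First I would record the elementary bookkeeping fact that every $\sim_2$-equivalence class of $\mathcal{P}(D)$ is completely determined by the first, third, and fourth vertices shared by all of its members: if $P=(p_1,p_2,p_3,p_4)\in\mathcal{P}(D)$ then its $\sim_2$-class is exactly $E_{D,2}[p_1,p_3,p_4]$, and a path in $E_{D,2}[x,y,z]$ determines the triple $(x,y,z)$ as its (first, third, fourth) vertices. Hence the distinct $\sim_2$-equivalence classes of $\mathcal{P}(D)$ are precisely the nonempty sets among $E_{D,2}[x,y,z]$ as $(x,y,z)$ ranges over $V(D)^3$, so their number equals $|\{(x,y,z)\in V(D)^3 : E_{D,2}[x,y,z]\ne\emptyset\}|$; symmetrically, the number of $\sim_3$-equivalence classes of $\mathcal{P}(D)$ equals $|\{(x,y,z)\in V(D)^3 : E_{D,3}[x,y,z]\ne\emptyset\}|$.

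Next I would apply Observation~\ref{obs2}, which gives $E_{D,2}[x,y,z]=E_{D^R,3}[z,y,x]$ for every triple; in particular $E_{D,2}[x,y,z]\ne\emptyset$ if and only if $E_{D^R,3}[z,y,x]\ne\emptyset$. Since $(x,y,z)\mapsto(z,y,x)$ is an involution of $V(D)^3=V(D^R)^3$, it maps $\{(x,y,z):E_{D,2}[x,y,z]\ne\emptyset\}$ bijectively onto $\{(x,y,z):E_{D^R,3}[x,y,z]\ne\emptyset\}$. Therefore the number of $\sim_2$-equivalence classes of $\mathcal{P}(D)$ equals the number of $\sim_3$-equivalence classes of $\mathcal{P}(D^R)$. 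Invoking Observation~\ref{obs3} once for $D$ and once for $D^R$ rewrites the left-hand count as $\sum_{v\in V(D)}\first_D(v)$ and the right-hand count as $\sum_{v\in V(D^R)}\sec_{D^R}(v)$, which is the first claimed identity.

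For the second identity, $\sum_{v\in V(D)}\sec_D(v)=\sum_{v\in V(D^R)}\first_{D^R}(v)$, I would simply apply the first identity with $D^R$ in the role of $D$ and use $(D^R)^R=D$. I do not expect a genuine obstacle here, since everything past the first paragraph is a one-line substitution. The one point that deserves care is exactly that first paragraph: one must check that an equivalence class is faithfully indexed by its triple of fixed vertices and that empty index triples are discarded on both sides, so that the involution $(x,y,z)\mapsto(z,y,x)$ really does match the nonempty classes one-to-one rather than merely providing an injection in one direction.
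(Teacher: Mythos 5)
Your proposal is correct and is essentially the argument the paper intends: the paper offers no explicit proof, stating only that Observations~\ref{obs2} and~\ref{obs3} ``lead to'' the result, and your derivation (the involution $(x,y,z)\mapsto(z,y,x)$ matching nonempty $\sim_2$-classes of $\mathcal{P}(D)$ with nonempty $\sim_3$-classes of $\mathcal{P}(D^R)$, then counting via Observation~\ref{obs3}) is exactly the omitted bookkeeping. Your care about faithful indexing by triples and discarding empty classes is a welcome bit of rigor but does not change the route.
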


For a bipartite digraph $D[X,Y]$, let $\Lambda(D)$ denote the number of pairs $u, v$ of vertices in $D$ with $u \in X$, $v \in Y$ and neither $(u,v) \in A(D)$ nor $(v, u) \in A(D)$. Now, we relate the size of a minimum feedback arc set in a bipartite digraph $D$ that has no 4-cycles and the number $\Lambda(D)$. Similar results are known for digraphs that have no 3-cycles \cite{Chen15,Chudnovsky08,Dunkum11} and the proofs crucially use a double counting argument concerning induced paths on three vertices. Our proof for bipartite digraphs with no 4-cycles is along similar lines but involves a more intricate counting argument related to induced paths on four vertices. 

\begin{lemma}
\label{lem:bt-ep}
Let $D[X,Y]$ be a bipartite digraph in which for every pair $u \in X$, $v \in Y$ of distinct vertices, at most one of $(u,v)$ or $(v,u)$ is in $A(D)$. If $D$ has no 4-cycles, then we can compute a feedback arc set of $D$ of size at most $\Lambda(D)$ in polynomial time.
\end{lemma}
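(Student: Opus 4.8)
The plan is to prove the (equivalent) stronger statement that $D$ admits a linear order of $V(D)$ whose set of backward arcs --- always a feedback arc set --- has size at most $\Lambda(D)$, and to build such an order one endpoint at a time, by induction on $|V(D)|+\Lambda(D)$ (the case $|V(D)|\le 1$ being trivial). In the inductive step, first suppose some non-adjacent pair $u\in X$, $v\in Y$ admits an orientation of $\{u,v\}$ that can be added to $D$ without creating a $4$-cycle; adding that arc yields a bipartite oriented $4$-cycle-free digraph $D'$ with $\Lambda(D')=\Lambda(D)-1$, and the order furnished for $D'$ by induction, viewed as an order of $D$, has at most $\Lambda(D)-1\le\Lambda(D)$ backward arcs. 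So we may assume $D$ is \emph{saturated}: for every non-adjacent pair, adding either arc creates a $4$-cycle. Because a bipartite digraph has no triangle, adding $v\to u$ can create a $4$-cycle only if $D$ already contains a path $u\to a\to b\to v$; the only non-consecutive pair on this path that could be adjacent is $\{u,v\}$, which is not, so the path belongs to $\mathcal{P}(D)$. Thus saturation says exactly that every non-adjacent pair of $D$ is the end-pair of induced $4$-vertex paths running in \emph{both} directions.

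For $v\in V(D)$, let $\nu_D(v)$ be the number of vertices of the opposite part of the bipartition that are non-adjacent to $v$, so that $\nu_D(v)=\Lambda(D)-\Lambda(D-v)$ and $D-v$ again satisfies the hypotheses. Placing $v$ at the front of an order of $D-v$ turns exactly the $|N^-(v)|$ in-arcs of $v$ into backward arcs, and placing it at the back turns exactly the $|N^+(v)|$ out-arcs into backward arcs, changing nothing within $D-v$. Hence, if $D$ contains a vertex $v$ with $\min(|N^+(v)|,|N^-(v)|)\le\nu_D(v)$, then taking an order of $D-v$ with at most $\Lambda(D-v)$ backward arcs (which exists by induction, since $|V(D-v)|+\Lambda(D-v)<|V(D)|+\Lambda(D)$) and inserting $v$ at the cheaper end produces an order of $D$ with at most $\nu_D(v)+\Lambda(D-v)=\Lambda(D)$ backward arcs. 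All of this is polynomial time and the recursion has polynomial depth, so the whole lemma reduces to the combinatorial claim: \emph{every saturated bipartite oriented $4$-cycle-free digraph contains a vertex $v$ with $\min(|N^+(v)|,|N^-(v)|)\le\nu_D(v)$.}

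Proving this claim is the heart of the argument, and is where the induced-$P_4$ apparatus is meant to be used, in the spirit of the double counting over induced $3$-vertex paths in the triangle-free analogues of~\cite{Chen15,Chudnovsky08,Dunkum11}. I would argue by contradiction: if every vertex $v$ had both $|N^+(v)|>\nu_D(v)$ and $|N^-(v)|>\nu_D(v)$, then summing over either part would give $|A(D)|\ge 2\Lambda(D)+2\max(|X|,|Y|)$, i.e. $D$ is dense relative to $\Lambda(D)$. Against this, each path in $\mathcal{P}(D)$ has its first and fourth vertices forming a non-adjacent pair of $D$ (in a bipartite digraph these are the only non-consecutive vertices that can be adjacent), so $\mathcal{P}(D)$ is partitioned, via $P\mapsto\{\text{first vertex of }P,\ \text{fourth vertex of }P\}$, into groups indexed by the non-adjacent pairs, and $\sim_2$, $\sim_3$ refine these groups by also fixing the third, respectively the second, vertex. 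Reading $\sum_v\first_D(v)$ and $\sum_v\sec_D(v)$ as the numbers of $\sim_2$- and $\sim_3$-classes (Observation~\ref{obs3}), and noting via Observations~\ref{obs1},~\ref{obs2},~\ref{obs4} that these two totals --- like $\Lambda(D)$ and the size of a minimum feedback arc set --- merely interchange under arc-reversal, I may pass to $D^R$ if needed and assume $\sum_v\first_D(v)\le\sum_v\sec_D(v)$. The plan from here is to (i) bound the number of $\sim_3$-classes above by a quantity that the density estimate keeps close to $\Lambda(D)$ (a $\sim_3$-class is pinned down by a non-adjacent ordered pair together with one out-neighbour of its first vertex), and (ii) show, using saturation to guarantee that induced $4$-vertex paths exist and $4$-cycle-freeness to keep distinct ones from merging, that a vertex $v$ with $|N^+(v)|,|N^-(v)|$ both exceeding $\nu_D(v)$ is an internal vertex of at least $\min(|N^+(v)|,|N^-(v)|)$ pairwise distinct $\sim_3$-classes; summing (ii) over all $v$ then contradicts (i).

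The main obstacle is step (ii): making the charging \emph{injective}. A single non-adjacent pair can be the end-pair of many induced $4$-vertex paths, and each such path has \emph{two} internal vertices, so it can be re-routed in two distinct ways --- which is exactly why one needs both $\sim_2$ and $\sim_3$ (a single ``middle vertex'' suffices for $3$-vertex paths but not for $4$-vertex ones), and why, as the remark preceding the statement warns, the count is ``more intricate''. I expect the proof to have to play the $\sim_2$-count of $D$ against the $\sim_3$-count of $D$ (equivalently, the $\sim_2$-count of $D^R$) while exploiting the rigid local structure forced by $4$-cycle-freeness around a hypothetical counterexample vertex, so as to turn the density estimate into an \emph{injection} --- not merely a surjection --- from the backward-arc budget into the non-adjacent pairs; verifying that injectivity is, I believe, where essentially all of the work lies.
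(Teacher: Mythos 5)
Your reduction to the statement about linear orders, and the two preliminary moves (saturating non-adjacent pairs, and peeling off a vertex $v$ with $\min(|N^+(v)|,|N^-(v)|)\le\nu_D(v)$ when one exists), are sound as far as they go. But the entire proof then rests on the combinatorial claim you isolate in italics --- that every saturated bipartite oriented $4$-cycle-free digraph contains a vertex $v$ with $\min(|N^+(v)|,|N^-(v)|)\le\nu_D(v)$ --- and you do not prove it. What you offer instead is a plan whose two halves are both problematic: in step (i), the number of $\sim_3$-classes is determined by a non-adjacent ordered pair \emph{together with} an out-neighbour of its first vertex, so it can exceed $\Lambda(D)$ by a factor of the typical out-degree, and the density estimate $|A(D)|\ge 2\Lambda(D)+2\max(|X|,|Y|)$ pushes in the wrong direction for bounding it near $\Lambda(D)$; and in step (ii) you yourself flag that the injectivity of the charging is ``where essentially all of the work lies'' and leave it open. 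As it stands this is a conjectured lemma plus a heuristic, not a proof; it is not even established that the claim is true.

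It is also worth noting that the paper's argument avoids this claim entirely, and its use of $\first_D$ and $\sec_D$ is different in kind from what you envisage. Rather than deleting a single vertex and inserting it at one end of an order, the paper picks (by averaging, possibly after passing to $D^R$) a vertex $u$ with $\first_D(u)\le\sec_D(u)$ and splits $D$ into two \emph{vertex-disjoint} pieces $D_1=D[X_1,Y_1\cup Y_3]$ and $D_2=D[X_2\cup\{u\},Y_2]$, where $Y_1=N^-(u)$, $Y_2=N^+(u)$, $Y_3=Y\setminus(Y_1\cup Y_2)$, $X_2=N^+(Y_2)$ and $X_1=X\setminus(X_2\cup\{u\})$. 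Four-cycle-freeness kills all arcs from $X_2$ to $Y_1$, and the definition of $X_1$ kills all arcs from $Y_2$ to $X_1$, so the only arcs crossing from $D_2$ back to $D_1$ are the $\first_D(u)$ arcs from $X_2$ to $Y_3$; deleting these and recursing on both pieces gives a feedback arc set of size at most $\Lambda(D_1)+\Lambda(D_2)+\first_D(u)\le\Lambda(D_1)+\Lambda(D_2)+\sec_D(u)\le\Lambda(D)$, since $\sec_D(u)$ counts non-adjacent pairs with one vertex in $Y_1\subseteq V(D_1)$ and the other in $X_2\subseteq V(D_2)$, which are disjoint from the non-adjacencies internal to $D_1$ and $D_2$. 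If you want to salvage your vertex-peeling route you must actually prove your italicized claim; otherwise the decomposition above is the missing idea.
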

\begin{proof}
We will prove the claim by induction on $|V(D)|$. The claim trivially holds for $|X|<2$ or $|Y| < 2$ as in these cases, the empty set is a feedback arc set. Hence, assume that $|X| \geq 2$ and $|Y| \geq 2$. 

First we apply a simple preprocessing rule on $D$. If $D$ has a vertex $v$ that either has no in-neighbours or no out-neighbours then delete $v$ from $D$. As $v$ is not in any cycle of $D$, any feedback arc set of $D'$ is an feedback arc set of $D$. 

\noindent {\bf Case 1:} Suppose $\sum_{v \in V(D)} \first_D(v) \leq \sum_{v \in V(D)} \sec_D(v)$. Then, there is a vertex $u \in V(D)$ such that $\first_D(u) \leq \sec_D(u)$. Without loss of generality assume that $u \in X$. Consider the following sets of vertices of $D$: $Y_1 = N^{-}(u)$, $Y_2 = N^{+}(u)$, $Y_3=Y \setminus (Y_1 \cup Y_2)$, $X_2=N^+(Y_2)$ and $X_1=X \setminus (X_2 \cup \{u\})$. Following are the properties of these sets.
\begin{itemize}
\item $Y_1$ and $Y_2$ are non-empty due to the preprocessing.
\item There is no arc from a vertex $x \in X_2$ to a vertex $y \in Y_1$. Otherwise, $(u,y',x,y)$ is a 4-cycle where $x \in N^+(y')$ and $y' \in Y_2$.
\item By the definition of $X_1$, there is no arc from a vertex $y \in Y_2$ to a vertex $x \in X_1$.
\end{itemize}

Let $D_1$ denote the subgraph $D[X_1, Y_1 \cup Y_3 ]$ and $D_2$ denote the subgraph $D[X_2 \cup \{u\}, Y_2]$. As $D_1$ and $D_2$ are vertex-disjoint subgraphs of $D$, we have $\Lambda(D) \geq \Lambda(D_1) + \Lambda(D_2)$. Further, $\sec_D(u)$ is the number of non-adjacent pairs $a, b$ such that $a \in Y_1$ and $c \in X_2$. As $a \in V(D_1)$ and $c \in V(D_2)$, we have $\Lambda(D) \geq \Lambda(D_1) + \Lambda(D_2) + \sec_D(u)$. 

Let $E$ denote the set of arcs $(x,y)$ such that $x \in X_2$ and $y \in Y_3$.  Let $F_1$ and $F_2$ be feedback arc sets of $D_1$ and $D_2$, respectively. We claim that $F = F_1 \cup F_2 \cup E$ is a feedback arc set of $D$. The sets $F_1$ and $F_2$ are obtained inductively and thus $F$ can be computed in polynomial time. Now, if there exists a cycle $C$ in the graph obtained from $D$ by deleting the arcs in $F$,  then $C$ has an arc $(p,q)$ with $p \in V(D_1)$ and $q \in V(D_2)$ and an arc $(r,s)$ with $r \in V(D_2)$ and $s \in V(D_1)$. Following are the properties of vertices $r$ and $s$.
\begin{itemize}
    \item It is not possible that $r \in Y_2$ and $s \in X_1$ by the definition of $X_1$. 
    \item It is not possible that $r \in X_2$ and $s \in Y_1$ as $D$ has no 4-cycle.
\end{itemize}
Therefore, it follows that $(r,s) \in E$ which leads to a contradiction. Therefore, $F$ is a feedback arc set of $D$ of size $|F|=|F_1|+|F_2|+|E|$. Also, $|E| = \first_D(u)$ and by the choice of $u$, we have $\first_D(u) \leq \sec_D(u)$. Hence, we can conclude that $|F| \leq |F_1|+ |F_2| + \sec(u)$ and by induction hypothesis, $|F_1| \leq \Lambda (D_1)$ and $|F_2| \leq \Lambda (D_2)$. It now follows that $|F| \leq \Lambda (D)$. \\

\noindent {\bf Case 2: } Suppose $\sum_{v \in V(D)} \first_D(v) > \sum_{v \in V(D)} \sec_D(v)$. In this case, from Observation \ref{obs4}, it follows that $\sum_{v \in V(D)} \first_{D^R}(v) < \sum_{v \in V(D)} \sec_{D^R}(v)$. Then, there is a vertex $u \in V(D^R)$ such that $\first_{D^R}(u) \leq \sec_{D^R}(u)$. By a similar argument as in Case 1, it follows that $D^R$ (and hence $D$ from Observation \ref{obs1}) has a feedback arc set of size at most $\Lambda(D^R)=\Lambda(D)$.
\end{proof}

This leads to the following main result.

\begin{theorem}
For every non-negative integer $k$, every bipartite tournament $T$ either contains $k$ arc-disjoint 4-cycles or has a feedback arc set of size at most $7(k-1)$ that can be obtained in polynomial time.
\end{theorem}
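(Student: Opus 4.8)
The plan is to build a feedback arc set from a greedy packing of $4$-cycles and then prune it with the help of a topological ordering. If $T$ already contains $k$ arc-disjoint $4$-cycles we are done, so assume it does not. Greedily compute an inclusion-wise maximal family $\mathcal{C}=\{C_1,\dots,C_t\}$ of pairwise arc-disjoint $4$-cycles of $T$ --- repeatedly locate a $4$-cycle (possible in polynomial time) and delete its arcs until none remain. Since $\mathcal{C}$ is an arc-disjoint family and $T$ has no $k$ such cycles, $t\le k-1$. Put $A'=\bigcup_{i}A(C_i)$ and $D=T-A'$. By maximality $D$ contains no $4$-cycle; $D$ is bipartite with at most one arc between each pair of vertices; and each of the $|A'|=4t$ removed arcs joins a pair that is adjacent in the bipartite tournament $T$ but non-adjacent in $D$, so $\Lambda(D)=4t$. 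By Lemma~\ref{lem:bt-ep} we obtain in polynomial time a feedback arc set $F'$ of $D$ with $|F'|\le\Lambda(D)=4t$, and since $T-(F'\cup A')=D-F'$ is acyclic, $F'\cup A'$ is a feedback arc set of $T$ of size at most $8t$.

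The key step is to delete one arc per packing cycle from $F'\cup A'$. Let $\sigma$ be a topological ordering of the acyclic digraph $D-F'$, whose vertex set is $V(T)$. Fix $i$ and write $C_i=(a_i,b_i,c_i,d_i)$. The four arcs $(a_i,b_i),(b_i,c_i),(c_i,d_i),(d_i,a_i)$ cannot all run from a $\sigma$-later to a $\sigma$-earlier vertex, as chaining those inequalities around $C_i$ would yield $\sigma(a_i)>\sigma(a_i)$; hence some $e_i\in A(C_i)$ has $\sigma(\tail(e_i))<\sigma(\head(e_i))$. The arcs $e_1,\dots,e_t$ are pairwise distinct (the $C_i$ are arc-disjoint) and all belong to $A'\subseteq F'\cup A'$. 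Now $F:=(F'\cup A')\setminus\{e_1,\dots,e_t\}$ is a feedback arc set of $T$: indeed $T-F=(D-F')+\{e_1,\dots,e_t\}$, and appending to the acyclic digraph $D-F'$ only arcs whose tail precedes their head in the topological ordering $\sigma$ leaves $\sigma$ a valid topological ordering, so $T-F$ is acyclic. Therefore $|F|\le |F'|+|A'|-t\le 4t+4t-t=7t\le 7(k-1)$, and the whole construction runs in polynomial time.

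The single point requiring care is this pruning. One must argue that the $e_i$ may be removed \emph{simultaneously}, which is immediate because a digraph remains acyclic after adding any set of arcs that all go forward in a common topological ordering; and that every packing cycle really contributes its own removable arc, which follows from the arc-disjointness of $\mathcal{C}$ together with the elementary fact that no directed cycle has a topological ordering of its vertices. Everything else is accounting, and the quantitative core --- that a $4$-cycle-free bipartite digraph $D$ admits a feedback arc set of size at most $\Lambda(D)$ --- is exactly Lemma~\ref{lem:bt-ep}, so no further work is needed there. The corner cases are harmless: for $k=0$ the empty family of $4$-cycles already establishes the first alternative, and for $k=1$ the argument above is run with $t=0$, which forces $F'=F=\emptyset$ precisely when $T$ has no $4$-cycle.
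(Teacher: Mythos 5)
Your proposal is correct and follows essentially the same route as the paper: apply Lemma~\ref{lem:bt-ep} to the $4$-cycle-free digraph obtained by deleting the arcs of a maximal packing (giving a feedback arc set of size at most $4(k-1)$), then use a topological ordering of the resulting acyclic digraph to argue that at most three arcs per packing cycle need to be retained. The only cosmetic difference is bookkeeping --- you discard one forward arc per packing cycle from $F'\cup A'$, while the paper directly keeps only the backward arcs of each packing cycle --- and both yield the same $7(k-1)$ bound.
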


\begin{proof}
Suppose $\mathcal{C}$ is a maximal set of arc-disjoint 4-cycles in $T$ with $|\mathcal{C}|\leq k-1$. Let $D$ denote the digraph obtained from $T$ by deleting the arcs that are in some 4-cycle in $\mathcal{C}$. Clearly, $D$ has no 4-cycle and $\Lambda(D) \leq 4(k-1)$. From Lemma \ref{lem:bt-ep}, we know that $D$ has a feedback arc set $F$ of size at most $4(k-1)$. Next, consider a topological ordering $\sigma$ of $D-F$. Each 4-cycle of $\mathcal{C}$ contains at most three arcs which are backward in $\sigma$. If we denote by $F'$ the set of all the arcs of the 4-cycles of C which are backward in $\sigma$, then we have $|F'| \leq 3(k-1)$ and $F \cup F'$ is a feedback arc set of $D$. Therefore, $T$ has a  feedback arc set of size at most $7(k-1)$.
\end{proof}

\end{document}